\theoremstyle{plain}
\newtheorem{theorem}{Theorem}
\newtheorem{proposition}{Proposition}
\newtheorem{lemma}{Lemma}
\theoremstyle{remark}
\newtheorem{remark}{Remark}
\theoremstyle{definition}
\newtheorem{definition}{Definition}
\def\BibTeX{{\rm B\kern-.05em{\sc i\kern-.025em b}\kern-.08em
    T\kern-.1667em\lower.7ex\hbox{E}\kern-.125emX}}
\begin{document}

\title{Stable Source Coding
}

\author{ Zhenduo Wen and Amin Gohari}

\maketitle

\begin{abstract}

A source encoder is stable if a small change in the source sequence (e.g., changing a few symbols) results in a small (or bounded) change in the output codeword. By this definition, the common technique of random binning is unstable; because the mapping is random, two nearly identical source sequences can be assigned to completely unrelated bin indices. We study compression rates of stable lossless source codes.  Using combinatorial arguments, we derive information-theoretic limits on the achievable rate as a function of the stability parameters. 

\end{abstract}

\section{Introduction}

Artificial neural-network-based (ANN-based) compressors are increasingly being adopted in modern source coding and multimedia compression (e.g., \cite{wagner2021neural, balle2020nonlinear, AyferChen2025information, ozyilkan2024distributed, yang2023introduction} and the references therein). Beyond their empirical success, a recurring observation is that networks trained by gradient-based methods often exhibit limited sensitivity
to small input perturbations. 
This behavior is closely connected to notions of \emph{algorithmic stability}
\cite{bousquet2002stability,hardt2016train,elisseeff2005stability,kutin2002almost},
where small changes in the input lead to small changes in the output.

While such stability is desirable for robustness and generalization, it can be at odds with classical source-coding constructions,
which rely on highly discontinuous partitions of the source space (e.g., quantization and binning). In random binning, even source sequences that are very close to one another can be assigned to completely unrelated bins. A similar issue arises in lossy source coding using vector quantization; the mapping is discontinuous at the boundaries of quantization cells, meaning a tiny shift in the input can trigger a sudden switch to a different quantization vector.

Recent empirical and theoretical work has reported that ANN-based compressors may underperform on sources or tasks
where optimal compression requires learning discontinuous encoders/partitions
\cite{bhadane2022neural,ozyilkan2024neural,ozyilkan2023neural}.
These findings suggest that Shannon-style achievability results, which allow arbitrary encoders,
may not fully predict the behavior of learned compressors operating under stability-like constraints.

Motivated by this gap, we study lossless source coding under an explicit stability requirement: An encoder $f_n:\mathcal{X}^n\rightarrow \{0,1\}^{nR}$ is said to be \emph{$(D_n,D_n')$-stable} if for all
$x^n,\tilde x^n\in\mathcal{X}^n$,
\begin{equation}
    d_I(x^n,\tilde x^n)\le D_n
\implies
d_H\bigl(f_n(x^n),f_n(\tilde x^n)\bigr)\le D_n' \label{eq:stab_on_A}
\end{equation}
where $d_H$ is the Hamming distance on binary sequences, and $d_I(x^n,\tilde x^n)=\sum_{i=1}^nd_I(x_i,\tilde x_i)$
for some distance metric $d_I:\mathcal{X}\times\mathcal{X}\rightarrow \mathbb{R}_+$.
Our goal is to characterize the minimum achievable rate $R$ as a function of $\{(D_n,D_n')\}_{n \geq 1}$.

We note that there is a different notion of stability used in \emph{universal source coding} \cite{ziv2003universal}, in which stability refers to the algorithm's ability to perform well (achieve a compression rate close to entropy) even when the source statistics are not perfectly stationary or known.

Our notion of stability can be viewed as a local version of \emph{metric embedding}: only pairs within distance $D_n$ must remain within distance $D_n'$ after mapping.
This is weaker than global distance-preserving embeddings studied in the metric embedding literature and in Johnson--Lindenstrauss-type results
\cite{johnson1984extensions,dirksen2024fast,plan2014dimension}. Moreover, \emph{minimum-distortion embedding} approaches \cite{agrawal2021minimum} are algorithmic in nature and can give insights on achievability,
whereas our focus here is on converse bounds for achievable rates under stability constraints.

\smallskip
\noindent\textbf{Contributions.}
We develop converse bounds by reducing stability-constrained lossless coding to a graph homomorphism problem.
After restricting to a fixed type class,
we view sequences as vertices of a certain union of \emph{generalized Johnson graphs} on the source side,
and codewords as vertices of a \emph{Hamming graph} on $\{0,1\}^{nR}$.
The stability constraint implies that, on a large induced subgraph corresponding to correct decoding, the encoder induces an \emph{injective graph homomorphism} between these graphs.
What such a homomorphism can preserve yields explicit necessary conditions on $(R,D_n,D_n')$.

\smallskip
\noindent\textbf{Organization.}

Section \ref{sec::prelims} introduces the formal problem setup and necessary notations. Section \ref{sec::converse_bounds} states the converse bounds we derived. Section \ref{sec::proofs} presents the proof of our converse bounds. Section \ref{sec::visualizations} and \ref{sec::discussions} discuss numerical evaluations and possible future directions.

\section{Preliminaries}\label{sec::prelims}
\subsection{Problem Setup}\label{sec::problem_setup}

Let $\{X_i\}_{i=1}^\infty$ be an independent and identically distributed (i.i.d.)\ source with discrete alphabet $\mathcal{X}$ and distribution $P_X$. Without loss of generality, assume that $P_X(x)>0$ for all $x$.
Given a compression rate $R$ and for blocklength $n \in \mathbb{N}$, a (lossless) compression scheme consists of:
\begin{itemize}
  \item An encoder $
    f_n: \mathcal{X}^n \to \{0,1\}^{nR},
  $
  that maps a length-$n$ source sequence $X^n = (X_1,\dots,X_n)$ to a binary string of length $nR$ bits.
  \item A decoder 
  $
    g_n: \{0,1\}^{nR} \to \mathcal{X}^n,
  $
  that maps the binary description back to an estimate $\hat{X}^n = g_n(f_n(X^n))$ of the original sequence.
\end{itemize}

 Given sequences $\{D_n\}$ and $\{D'_n\}$, we say that a rate $R$ is $(D_n,D'_n)$-\emph{achievable} if there is a sequence of lossless codes $\{(f_n,g_n)\}_{n\ge1}$ such that the decoding error probability
\begin{equation}
      P_e^{(n)} \triangleq \Pr\big[g_n(f_n(X^n)) \neq X^n\big] \to 0 \quad \text{as } n \to \infty,\label{def::error_prob}
\end{equation}
and the encoding function $f_n$ satisfies \eqref{eq:stab_on_A} for all $n$.

\subsection{Notations}
The function
$h_2(x) \triangleq-x\log_2 x-(1-x)\log_2(1-x)$ denotes the binary entropy (base-$2$). We use $\log(\cdot)$ to denote logarithms in base~$2$, and $\ln(\cdot)$ to denote the natural logarithm. $\Delta(G)$ and $\omega(G)$ denote the maximum degree and maximum clique size of graph $G$. $d_H(\cdot, \cdot)$ is the Hamming distance.

For $x^n\in\mathcal X^n$, let
\begin{equation}
    N(a|x^n)\triangleq \sum_{i=1}^n \mathbf 1\{x_i=a\},\qquad
\hat P_{x^n}(a)\triangleq \frac{1}{n}N(a|x^n)
\end{equation}
be the empirical counts and empirical frequencies. 
Let $\Pi_n$ denote the set of all types on $\mathcal X$ with denominator $n$,
and for $\pi_n\in\Pi_n$ let the type class be
\begin{equation}
    \mathcal T(\pi_n)\triangleq \{x^n\in\mathcal X^n:\hat P_{x^n}=\pi_n\}.
\end{equation}

Let $\Pi_n({\epsilon}, P_X)$ be the set of all ${\epsilon}$-strongly typical types,
\begin{equation}
    \Pi_n(\epsilon, P_X)\triangleq\Bigl\{\pi_n\in\Pi_n:\forall a\in\mathcal X,\ |\pi_n(a)-P_X(a)|\le \epsilon\Bigr\}.
\end{equation}

For $\epsilon>0$, define the $\epsilon$-strongly typical set
\begin{equation}
    \mathcal T_\epsilon^{(n)}(X)\triangleq \Bigl\{x^n\in\mathcal X^n:
\forall a\in\mathcal X,\ |\hat P_{x^n}(a)-P_X(a)|\le \epsilon\Bigr\}.
\end{equation}

Consider a sequence $\epsilon_n$ such that $\epsilon_n\rightarrow 0$ and $\epsilon_n\sqrt{n}\rightarrow\infty$ as $n$ goes to infinity. Then, the probability of the typical set approaches one under an i.i.d. source \cite{csiszar2011information}.

\section{Converse bounds}\label{sec::converse_bounds}

To keep the notation light and to highlight the main ideas, we present the construction
and the subsequent degree/clique computations in the binary case $\mathcal X=\{0,1\}$. Moreover, we use per-letter Hamming distance, so $d_I=d_H$ on $\{0,1\}$ and the induced block distance is the usual Hamming distance. Without loss of generality, we assume that $0 < p \triangleq P_X(1) < \frac12$.

For sequences $\{(D_n, D'_n)\}_{n\ge1}$, we consider two regimes:
\begin{itemize}
\item[(i.)] \textit{Linear regime:} $D_n=\Theta(n)$ and $D_n'=\Theta(n)$. Here, we assume the limits
\[
d_1 \triangleq \lim_{n\to\infty}\frac{D_n}{n} > 0,\qquad
d_2 \triangleq \lim_{n\to\infty}\frac{D_n'}{n} > 0
\]
exist with $0\le \frac{d_1}{2} \le p$ and $0 \leq d_2 \leq R$. The reason for this assumption is as follows: intuitively speaking, any two typical sequences $x_1^n$ and $x_2^n$ have weight $np$ and Hamming distance at most $2np$. Therefore, $d_1>2p$ is a trivial case. Similarly, we assume $0 \leq d_2 \leq R$.
\item[(ii.)] \textit{Sublinear regime:} $D_n=o(n)$ and $D_n'=o(n)$.
\end{itemize}

Below, we state our converse bounds on $(R, D_n, D_n')$. The proofs can be found in section \ref{sec::proofs}.

\begin{theorem}[Degree bound in the linear regime]\label{thm::degree_linear} Define
\begin{align*}
&s(x)\triangleq p\,h_2\!\Big(\frac{x}{p}\Big)+(1-p)\,h_2\!\Big(\frac{x}{1-p}\Big),
\\&F(d_1, p)\triangleq 
\begin{cases}
    s(\frac{d_1}{2}), &\quad \ 0 \leq \frac{d_1}{2} \leq p(1-p),
    \\h_2(p), &\quad \ p(1-p)\leq \frac{d_1}{2},
\end{cases}
\\&G(d_2, R)\triangleq
\begin{cases}
R\cdot h_2(\frac{d_2}{R}), & 0\le d_2 < R/2,\\
R, & d_2 \ge R/2.
\end{cases}
\end{align*}
Then, for all  \emph{$(nd_1,nd_2)$-achievable rates $R$}, the following must hold: $
F(d_1, p)\ \le\ G(d_2, R).$

\end{theorem}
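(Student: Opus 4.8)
The plan is to realize $f_n$, restricted to the sequences it decodes correctly, as an injective graph homomorphism, and then to compare vertex degrees on the two sides. Concretely, I fix a type $\pi_n$ with $\pi_n(1)$ close to $p$ and write $w=n\pi_n(1)$. On the source side I put the generalized Johnson graph $\mathcal J$ on $\mathcal T(\pi_n)$, joining $x^n,\tilde x^n$ whenever $0<d_H(x^n,\tilde x^n)\le D_n=nd_1$; since both sequences have weight $w$, such a pair differs by flipping $k$ ones and $k$ zeros with $2k\le nd_1$, so $\mathcal J$ is regular with closed-neighborhood size $\sum_{k=0}^{\lfloor nd_1/2\rfloor}\binom{w}{k}\binom{n-w}{k}$. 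On the codeword side I put the Hamming graph on $\{0,1\}^{nR}$ with threshold $D_n'=nd_2$, whose balls have size $\sum_{j=0}^{nd_2}\binom{nR}{j}$. Standard type/Stirling estimates give that $\tfrac1n\log$ of these two quantities converge to $F(d_1,p)$ and $G(d_2,R)$ respectively: the interior maximizer $\alpha=p(1-p)$ of $s(\alpha)$ is exactly what produces the two-branch definition of $F$, and the $d_2\ge R/2$ saturation is the usual volume bound for a more-than-half Hamming ball.

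Next I isolate the set $\mathcal G=\{x^n:g_n(f_n(x^n))=x^n\}$ of correctly decoded sequences. On $\mathcal G$ the encoder is injective, since two correctly decoded sequences sharing a codeword must decode to the same string, and the stability hypothesis \eqref{eq:stab_on_A} sends every edge of $\mathcal J$ to a pair at Hamming distance $\le D_n'$. Hence if $u\in\mathcal G$ has closed $\mathcal J$-neighborhood contained in $\mathcal G$, then $f_n$ maps that neighborhood injectively into the ball $B(f_n(u),D_n')$, forcing its size to be at most $\sum_{j=0}^{nd_2}\binom{nR}{j}$; taking $\tfrac1n\log$ already yields $F\le G$. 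It therefore suffices to exhibit, for each large $n$, a vertex of $\mathcal J$ whose neighborhood is essentially contained in $\mathcal G$.

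The main obstacle is precisely this last point: the hypothesis only gives $P_e^{(n)}\to0$, a statement about probability mass, whereas the degree comparison needs a statement about vertex counts. I would bridge the two by exploiting that $P_X^n$ is constant on each type class, so that within $\mathcal T(\pi_n)$ the fraction of bad vertices equals the conditional error probability $\epsilon(\pi_n)\triangleq\Pr[g_n(f_n(X^n))\ne X^n\mid X^n\in\mathcal T(\pi_n)]$. Averaging over typical types gives $\sum_{\pi_n\in\Pi_n(\epsilon_n,P_X)}\Pr[\mathcal T(\pi_n)]\,\epsilon(\pi_n)\le P_e^{(n)}$ while $\sum_{\pi_n}\Pr[\mathcal T(\pi_n)]\to1$, so some typical type $\pi_n^\star$ has bad fraction $\epsilon_n^\star\le P_e^{(n)}/(1-o(1))\to0$. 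Fixing this type class, a handshake count shows the induced subgraph $\mathcal J[\mathcal G]$ has average degree at least $(1-2\epsilon_n^\star)$ times the regular degree of $\mathcal J$, so some $u\in\mathcal G$ keeps at least a $(1-2\epsilon_n^\star)$-fraction of its neighbors inside $\mathcal G$; these suffice for the injective-image argument because $\tfrac1n\log(1-2\epsilon_n^\star)\to0$. Since every typical type satisfies $\pi_n^\star(1)\to p$ and $F(d_1,\cdot)$ is continuous, the source-side growth rate is still $F(d_1,p)$, and $F(d_1,p)\le G(d_2,R)$ follows. The remaining work is routine: verifying the two Stirling limits and checking the feasibility ranges (using $\alpha\le d_1/2\le p$, and the split $d_2<R/2$ versus $d_2\ge R/2$) that produce the case distinctions in $F$ and $G$.
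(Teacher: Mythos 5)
Your proposal is correct and follows essentially the same route as the paper's proof: the averaging-over-types argument to extract one typical type class with vanishing error fraction is the paper's Proposition~\ref{prop:reduction-one-type}, your handshake count giving a good vertex retaining a $(1-2\epsilon_n^\star)$-fraction of its neighbors is the degree half of Lemma~\ref{lem:deg-preserve-simple}, the injective stability map of that neighborhood into a Hamming ball is the paper's $\Delta(G_n')\le\Delta(H_n)$ step, and the Stirling asymptotics identifying $F(d_1,p)$ and $G(d_2,R)$ match Section~\ref{subsec:degree-converse}. Your self-correction from requiring the closed neighborhood to lie entirely in the good set to requiring only a $(1-2\epsilon_n^\star)$-fraction is exactly the right fix and mirrors the paper's use of the induced-subgraph degree bound.
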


\begin{theorem}[Clique size bound in the linear regime]\label{thm::clique_size_linear}
     Define
\begin{align*}
&\Psi(d_2,R)\triangleq
\begin{cases}
R\cdot h_2\!\big(\frac{d_2}{2R}\big), & 0\le d_2< R,\\[3pt]
R, & d_2\ge R.
\end{cases}
\end{align*}
Then for all  \emph{$(nd_1,nd_2)$-achievable rates $R$} where
$
\frac{d_1}{2}< p
$ the following must hold:
$
h_2(d_1/2)\ \le\ \Psi(d_2,R).$

\end{theorem}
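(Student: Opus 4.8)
The plan is to realize the encoder, restricted to correctly-decoded sequences, as an injective graph homomorphism and to push a large clique through it. Let $\mathcal{C}_n\subseteq\mathcal{X}^n$ denote the set of sequences decoded correctly, i.e.\ $g_n(f_n(x^n))=x^n$. On $\mathcal{C}_n$ the map $f_n$ is injective, since two correctly-decoded sequences sharing a codeword would decode identically. By the stability hypothesis \eqref{eq:stab_on_A}, $d_H(x^n,\tilde x^n)\le nd_1$ forces $d_H(f_n(x^n),f_n(\tilde x^n))\le nd_2$. Thus, equipping the source space with the graph $H$ in which $x^n\sim\tilde x^n$ iff $d_H\le nd_1$ (within a fixed type class this is exactly a union of generalized Johnson graphs), and the codeword space with the Hamming graph $K$ in which $u\sim v$ iff $d_H\le nd_2$, the encoder sends any clique of $H$ contained in $\mathcal{C}_n$ to a clique of $K$ of the same size. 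Hence it suffices to exhibit a clique of $H$ lying inside $\mathcal{C}_n$ of size $2^{n(h_2(d_1/2)-o(1))}$ and to bound the clique number of $K$ by $2^{n(\Psi(d_2,R)+o(1))}$; then $h_2(d_1/2)\le\Psi(d_2,R)$ follows on taking logarithms and letting $n\to\infty$.

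For the codeword side, a clique of $K$ is precisely a subset of $\{0,1\}^{nR}$ of diameter at most $nd_2$. The key combinatorial input is Kleitman's diametric theorem, which says that such a set has at most as many points as a Hamming ball of radius $nd_2/2$. For $d_2<R$ this ball has $\sum_{i\le nd_2/2}\binom{nR}{i}=2^{n(R\,h_2(d_2/(2R))+o(1))}$ points, while for $d_2\ge R$ the diameter already matches the full dimension, so the whole cube is a clique of size $2^{nR}$; in both cases the exponent is exactly $\Psi(d_2,R)$. I expect this to be the crux of the proof: it is precisely the factor of $2$ in the radius $nd_2/2$ (versus the radius $nd_2$ that controls the \emph{degree} in Theorem \ref{thm::degree_linear}) that makes the clique bound genuinely stronger, and obtaining the tight exponent needs the diametric inequality rather than the trivial single-center-ball estimate.

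For the source side, any Hamming ball of radius $\lfloor nd_1/2\rfloor$ is automatically a clique of $H$ (two of its points are within distance $nd_1$ by the triangle inequality) and has $2^{n(h_2(d_1/2)+o(1))}$ points because $d_1/2<p<\tfrac12$. To force this clique into the typical set, I would center the ball at a sequence of density $q^\star=(p-\tfrac{d_1}{2})/(1-d_1)$, which is nonnegative exactly because $d_1/2<p$; this makes the typical weight $np$ the modal weight of the ball, so even the slice of the ball lying in a single near-$p$ type class already has exponent $h_2(d_1/2)$.

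It remains to place this clique inside $\mathcal{C}_n$, and here I would use two averaging steps requiring only $P_e^{(n)}\to 0$. First, the probability-weighted average over typical type classes of the conditional error probability is at most $P_e^{(n)}/\Pr[\text{typical}]\to 0$, so some typical type class $\mathcal{T}(\pi^\dagger)$ has vanishing conditional error; since sequences in a type class are equiprobable, this means the \emph{fraction} of mis-decoded sequences in $\mathcal{T}(\pi^\dagger)$ tends to $0$. Second, letting the ball center range over all density-$q^\star$ sequences and exploiting the coordinate-permutation symmetry of the type class, every sequence of $\mathcal{T}(\pi^\dagger)$ lies in the same number of the resulting cliques; hence the average over centers of the mis-decoded fraction \emph{within} a clique equals the mis-decoded fraction of the whole type class, which tends to $0$. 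Selecting a center achieving at most this average yields a clique inside $\mathcal{C}_n$ of exponent $h_2(d_1/2)-o(1)$, finishing the argument. The only delicate point beyond the diametric theorem is this second averaging, which converts a small \emph{global} error fraction into a small error fraction on an exponentially smaller clique without assuming any decay rate on $P_e^{(n)}$.
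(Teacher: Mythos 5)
Your proposal is correct, and its overall skeleton matches the paper's: restrict to correctly decoded sequences, observe that $f_n$ there is an injective graph homomorphism into the codeword graph, lower-bound the source-side clique that survives this restriction, and upper-bound the codeword-side clique number via Kleitman's diametric theorem (Theorem~\ref{thm:kleitman}), whose radius-$nd_2/2$ form you correctly identify as the crux. Where you genuinely diverge is in how the source-side clique of exponent $h_2(d_1/2)$ is produced. The paper identifies cliques of $G_n$ with $t$-intersecting families, invokes the Ahlswede--Khachatrian theorem (Theorem~\ref{thm:Ahlswede}), lower-bounds $\omega(G_n)$ by the families $\mathcal{F}_r$ to get an exponent $\Lambda(d_1,p)$, and then needs a separate argument (rewriting $\Lambda$ as $\max H(X|Y)$ under a mismatch constraint and applying Jensen) to evaluate $\Lambda(d_1,p)=h_2(d_1/2)$. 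You instead exhibit an explicit clique: the weight-$np$ slice of a Hamming ball of radius $\lfloor nd_1/2\rfloor$ centered at a sequence of density $q^\star=(p-\tfrac{d_1}{2})/(1-d_1)$; the triangle inequality makes it a clique, and the direct binomial count (flipping $\tfrac{d_1}{2}\cdot n(1-q^\star)$ zeros and $\tfrac{d_1}{2}\cdot nq^\star$ ones of the center) gives exponent $(1-q^\star)h_2(d_1/2)+q^\star h_2(d_1/2)=h_2(d_1/2)$ exactly --- indeed your $q^\star$ coincides with the paper's optimizing parameters \eqref{eq:id_alpha}--\eqref{eq:id_beta}. Similarly, your second averaging step (over all density-$q^\star$ centers, using permutation symmetry and double counting to transfer the vanishing error fraction of the type class from Proposition~\ref{prop:reduction-one-type} onto a single ball-slice) is a self-contained instantiation of the paper's Lemma~\ref{lem:deg-preserve-simple}, which averages a maximum clique over automorphisms of the vertex-transitive graph $G_n$. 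What the paper's route buys is the exact identity $\omega(G_n)=M(n,k_n,t_n)$ and the knowledge, via Jensen, that $h_2(d_1/2)$ is the best exponent any clique-based argument can yield; what yours buys is elementariness, since the Ahlswede--Khachatrian theorem and the entropy evaluation are bypassed entirely and Kleitman remains the only nontrivial combinatorial input. The remaining details in your sketch are routine: rounding $nq^\star$ and $nd_1/2$ to integers and working with the actual type $\pi^\dagger$ (whose density tends to $p$ but need not equal it) cost only $o(1)$ in the exponent.
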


\begin{proposition}[Sublinear regime]\label{prop::sub_linear}
    Given a binary i.i.d source $P_X$, $D_n = o(n)$ and $D_n' = o(n)$,
for all sequences of $(D_n,D_n')$-stably decodable codes:
\begin{equation}
\lim_{n\rightarrow\infty}n^{D_n-D'_n}\cdot \frac{D_n'^{D_n'}\left((p-\epsilon_n)(1-p-\epsilon_n)\right)^{D_n/2}}{(D_n/2)^{D_n}\left(e^2R\right)^{D_n'}}\leq 1
\end{equation}
for any sequence $\epsilon_n$ such that $\epsilon_n\rightarrow 0$ and $\epsilon_n\sqrt{n}\rightarrow\infty$ as $n$ goes to infinity.
 As a corollary, if $(D_n,D_n')=(D,D')$ for some constants $D$ and $D'$, then, we must have
$    D \leq D'$. This is true regardless of the rate $R$, and is nontrivial, especially when $R > 1$, in which case the length expands rather than compresses.

\end{proposition}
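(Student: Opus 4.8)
The plan is to instantiate the injective-homomorphism principle described in the introduction on a single distance sphere of a Johnson graph inside a well-chosen typical type class, and to compare that sphere with a Hamming ball on the codeword side. Write $\Ac_n \triangleq \{x^n : g_n(f_n(x^n)) = x^n\}$ for the correct-decoding set. First I would record two structural facts. First, $f_n$ is injective on $\Ac_n$: if $y^n,z^n \in \Ac_n$ satisfy $f_n(y^n)=f_n(z^n)$, then applying $g_n$ gives $y^n = z^n$. Second, by the stability hypothesis \eqref{eq:stab_on_A}, every $\tilde x^n$ with $d_H(x^n,\tilde x^n)\le D_n$ is mapped into the radius-$D_n'$ Hamming ball around $f_n(x^n)$ in $\{0,1\}^{nR}$. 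Combining these, for any center $x^n$ the map $f_n$ injects the set of correctly-decoded points within distance $D_n$ into that codeword ball, so
\begin{equation}
\bigl|\{\tilde x^n \in \Ac_n : d_H(x^n,\tilde x^n)\le D_n\}\bigr| \;\le\; \sum_{j=0}^{D_n'}\binom{nR}{j}. \label{eq:plan-homo}
\end{equation}

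The main obstacle is that \eqref{eq:plan-homo} is only useful if I can locate a center $x^n$ whose entire $D_n$-neighborhood lies almost entirely in $\Ac_n$, whereas the hypothesis gives only $\Pr[\Ac_n^c]=P_e^{(n)}\to 0$, not that any fixed ball is error-free. I would resolve this with two averaging steps carried out inside a fixed type class, which is natural because in the binary case every sequence at a given Hamming distance from $x^n$ shares its type, keeping the neighborhood count exact and its members typical. Conditioned on the type the source is uniform on $\Tc(\pi_n)$; since the typical set has probability tending to one (under the stated condition on $\epsilon_n$) while $P_e^{(n)}\to 0$, a weighted average over typical types produces some $\pi_n^\star \in \Pi_n(\epsilon_n, P_X)$ whose in-class error fraction $\beta_n$ tends to $0$. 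Within $\Tc(\pi_n^\star)$ the ``exactly-$D_n$'' relation (obtained by swapping $D_n/2$ ones with $D_n/2$ zeros, so $D_n$ is taken even) is a regular graph of degree $\binom{n\pi_n^\star(1)}{D_n/2}\binom{n\pi_n^\star(0)}{D_n/2}$; double-counting the edges incident to the $\beta_n|\Tc(\pi_n^\star)|$ error vertices then yields a good center $x^n\in\Ac_n$, at most a fraction $\delta_n=\beta_n/(1-\beta_n)\to 0$ of whose sphere is bad.

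With such a center the left side of \eqref{eq:plan-homo} is at least $(1-\delta_n)$ times the sphere size $\binom{n\pi_n^\star(1)}{D_n/2}\binom{n\pi_n^\star(0)}{D_n/2}$. I would then insert the elementary bounds $\binom{m}{t}\ge (m/t)^t$ on the source side and $\sum_{j=0}^{D_n'}\binom{nR}{j}\le (D_n'+1)\binom{nR}{D_n'}\le (e^2nR/D_n')^{D_n'}$ on the codeword side, where the step $D_n'+1\le e^{D_n'}$ absorbs the polynomial prefactor and is precisely what turns $e$ into $e^2$. Using $\pi_n^\star(1)\ge p-\epsilon_n$ and $\pi_n^\star(0)\ge 1-p-\epsilon_n$, inequality \eqref{eq:plan-homo} becomes
\[
(1-\delta_n)\,\frac{n^{D_n}\bigl((p-\epsilon_n)(1-p-\epsilon_n)\bigr)^{D_n/2}}{(D_n/2)^{D_n}} \;\le\; \frac{n^{D_n'}(e^2R)^{D_n'}}{D_n'^{\,D_n'}}.
\]
Dividing through by the right-hand side and letting $n\to\infty$ so that $\delta_n\to 0$ yields exactly the claimed limit inequality (read as a $\limsup$ if the limit is not assumed to exist).

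Finally, for the corollary with $(D_n,D_n')=(D,D')$ constant, as $n\to\infty$ the factor $\bigl((p-\epsilon_n)(1-p-\epsilon_n)\bigr)^{D/2}$ converges to the positive constant $(p(1-p))^{D/2}$, and every remaining factor except $n^{D-D'}$ is a fixed positive constant. Hence the limit is finite and bounded by $1$ only if $n^{D-D'}$ does not diverge, i.e.\ only if $D\le D'$, a conclusion that does not involve $R$ at all.
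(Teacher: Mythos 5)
Your proposal is correct and follows essentially the same route as the paper: a reduction to a single typical type class by averaging over types, a double-counting argument to find a center whose distance-$D_n$ neighborhood is mostly correctly decoded (this is exactly the content of the paper's Lemma~\ref{lem:deg-preserve-simple}, which you inline via the sphere graph), and then the identical binomial estimates $\binom{m}{t}\ge (m/t)^t$ and $\sum_{j\le k}\binom{m}{j}\le (k+1)(em/k)^k$ with $(k+1)\le e^k$ producing the same $e^2$ factor. The only differences are presentational (you work with the exact-distance sphere rather than the ball, and phrase the injective-homomorphism step as a direct counting inequality), so the argument matches the paper's proof in substance.
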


\section{Proofs}\label{sec::proofs}

\subsection{Reduction to one type class}\label{sec::reduction_one_class}

\begin{proposition}[Reduction to one type class]\label{prop:reduction-one-type}
Take a sequence $\epsilon_n$ such that $\epsilon_n\rightarrow 0$ and $\epsilon_n\sqrt{n}\rightarrow\infty$ as $n$ goes to infinity. Consider a sequence $(f_n,g_n)$ of $(D_n,D_n')$-codes with rate $R$.
Then there exists a sequence of types $\{\pi_n:\pi_n\in\Pi_n({\epsilon_n},P_X)\}$ and a sequence of sets
$\mathcal A_n\subset \mathcal T(\pi_n)$ such that
\begin{equation}
\lim_{n\to\infty}\frac{|\mathcal A_n|}{|\mathcal T(\pi_n)|}=1\label{defrationeq}
\end{equation}
and for all $x^n\in\mathcal A_n$ we have $g_n(f_n(x^n))=x^n$.

\end{proposition}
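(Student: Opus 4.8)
The plan is to use a simple averaging (pigeonhole) argument over type classes, exploiting the fact that all sequences within a single type class are equiprobable under the i.i.d.\ source. That equiprobability is exactly what converts the probabilistic decoding guarantee $P_e^{(n)}\to 0$ into a combinatorial statement about cardinalities within one type class. Note that the stability constraint \eqref{eq:stab_on_A} plays no role in this reduction; only the vanishing error probability and typicality are needed.

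First I would set $\mathcal{C}_n \triangleq \{x^n\in\mathcal{X}^n : g_n(f_n(x^n))=x^n\}$, the set of correctly decoded sequences, so that $\Pr[X^n\in\mathcal{C}_n]=1-P_e^{(n)}\to 1$. By the hypothesis on $\epsilon_n$ (namely $\epsilon_n\to 0$ and $\epsilon_n\sqrt{n}\to\infty$), the typical set $\mathcal{T}_{\epsilon_n}^{(n)}(X)$ also has probability tending to one. I would then partition it into its constituent type classes, $\mathcal{T}_{\epsilon_n}^{(n)}(X)=\bigcup_{\pi_n\in\Pi_n(\epsilon_n,P_X)}\mathcal{T}(\pi_n)$, and for each typical type introduce the fraction of correctly decoded sequences $\rho_n(\pi_n)\triangleq |\mathcal{C}_n\cap\mathcal{T}(\pi_n)|/|\mathcal{T}(\pi_n)|$.

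The key observation is that under an i.i.d.\ source every sequence in a fixed type class has the same probability; hence the conditional probability of correct decoding given $X^n\in\mathcal{T}(\pi_n)$ equals $\rho_n(\pi_n)$, and $\Pr[X^n\in\mathcal{T}(\pi_n)\setminus\mathcal{C}_n]=\Pr[X^n\in\mathcal{T}(\pi_n)]\,(1-\rho_n(\pi_n))$. Writing $q_n(\pi_n)\triangleq\Pr[X^n\in\mathcal{T}(\pi_n)]$ and summing over typical types,
\[
\sum_{\pi_n\in\Pi_n(\epsilon_n,P_X)} q_n(\pi_n)\,(1-\rho_n(\pi_n))
=\Pr\!\big[X^n\in\mathcal{T}_{\epsilon_n}^{(n)}(X)\setminus\mathcal{C}_n\big]
\;\le\; P_e^{(n)}\to 0,
\]
while $\sum_{\pi_n} q_n(\pi_n)=\Pr[X^n\in\mathcal{T}_{\epsilon_n}^{(n)}(X)]\to 1$. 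Dividing, the $q_n$-weighted average of $1-\rho_n(\pi_n)$ over typical types is at most $P_e^{(n)}/\Pr[X^n\in\mathcal{T}_{\epsilon_n}^{(n)}(X)]\to 0$.

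Finally I would choose $\pi_n^\star$ to be a typical type minimizing $1-\rho_n(\pi_n)$ (well-defined since $\Pi_n(\epsilon_n,P_X)$ is finite and nonempty for large $n$). As the minimum of nonnegative numbers is at most their weighted average, $1-\rho_n(\pi_n^\star)\to 0$, i.e.\ $\rho_n(\pi_n^\star)\to 1$. Setting $\mathcal{A}_n\triangleq\mathcal{C}_n\cap\mathcal{T}(\pi_n^\star)$ then yields $|\mathcal{A}_n|/|\mathcal{T}(\pi_n^\star)|=\rho_n(\pi_n^\star)\to 1$, and $g_n(f_n(x^n))=x^n$ for all $x^n\in\mathcal{A}_n$ by construction, proving \eqref{defrationeq} and the decoding property. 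I do not expect a genuine obstacle here; the only point requiring care is making the equiprobability-within-a-type-class step explicit, since it is precisely what licenses replacing the probabilistic error bound by a ratio of cardinalities, and confirming that a typical type exists for all large $n$ so that the minimizer is defined.
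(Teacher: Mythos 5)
Your proposal is correct and follows essentially the same argument as the paper: both intersect the correctly-decoded set with the strongly typical set, exploit the uniform distribution within each type class to identify conditional correct-decoding probability with the cardinality ratio $|\mathcal A_n|/|\mathcal T(\pi_n)|$, and then select the best type class by an averaging/pigeonhole step. The only difference is cosmetic—you bound the weighted average of $1-\rho_n(\pi_n)$ by $P_e^{(n)}/\Pr[X^n\in\mathcal T_{\epsilon_n}^{(n)}(X)]$, while the paper equivalently lower-bounds $\max_{\pi_n}\beta_n(\pi_n)$ by $\Pr[X^n\in\mathcal F_n]$.
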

The proof of Proposition \ref{prop:reduction-one-type} can be found in the appendix.
\addtolength{\topmargin}{0.05in}
\subsection{Source and codeword graphs and their homomorphism}

In the binary setting, a fixed type class $\mathcal T(\pi_n)$ is a constant-weight slice of the Hamming cube. 

\begin{definition}[Source neighborhood graph]\label{def:Gn}

Let $G_n=G_n(\pi_n,D_n)$ be the graph with vertex set $V(G_n)=\mathcal T(\pi_n)$ and
an edge between two distinct vertices $x^n,\tilde x^n\in\mathcal T(\pi_n)$ iff
$
d_H(x^n,\tilde x^n)\le D_n.
$
\end{definition}
Note that the above graph is a union of certain \emph{generalized Johnson graphs}, also known as \emph{uniform subset graphs} \cite{chen1987hamiltonian}. Note that $G_n$ is vertex-transitive.

\begin{definition}[Codeword neighborhood graph]\label{def:Hn}
Let $\ell_n\triangleq \lfloor nR\rfloor$. Let $H_n=H_n(\ell_n,D_n')$ be the graph with vertex set $V(H_n)=\{0,1\}^{\ell_n}$ and
an edge between two distinct vertices $u,v\in\{0,1\}^{\ell_n}$ iff
$
d_H(u,v)\le D_n'.
$
Equivalently, $H_n(\ell_n,D_n')$ is the $D_n'$-th power of the $\ell_n$-dimensional
binary Hamming graph.
\end{definition}

Consider a sequence of types $\{\pi_n:\pi_n\in\Pi_n({\epsilon_n},P_X)\}$ and a sequence of sets
$\mathcal A_n\subset \mathcal T(\pi_n)$ as in Proposition~\ref{prop:reduction-one-type}. Let $G_n'$ denote the induced subgraph of $G_n$ on $\mathcal A_n$. Then, the fact that $g_n(f_n(x^n))=x^n$ on $\mathcal A_n$, shows that 
there is an \emph{injective graph homomorphism} from $G_n'$ to $H_n$, defined by $f_n$: note that the stability condition yields
\begin{align*}
    &(x^n,\tilde x^n)\in E(G_n') 
     \implies  (f_n|_{\mathcal A_n}(x^n),f_n|_{\mathcal A_n}(\tilde x^n))\in E(H_n),
\end{align*}
and $f_n|_{\mathcal A_n}$ is one-to-one on $V(G_n')$ due to $g_n(f_n(x^n))=x^n$.

From the graph homomorphism property, we deduce
\begin{align}
    \Delta(G'_n)&\leq \Delta(H_n),\label{eqnDDD1}\\
     \omega(G'_n)&\leq \omega(H_n)\label{eqnDDD2}
\end{align}
where $\Delta(\cdot)$ and $\omega(\cdot)$ are the maximum degree and maximum clique size.

\subsection{Relating $\Delta(G'_n)$ and $\omega(G'_n)$ to $\Delta(G_n)$ and $\omega(G_n)$}

The following lemma shows that the induced subgraph $G_n'$ inherits (almost) the maximum degree and clique number from $G_n$. Note that for fixed $\pi_n$ and $D_n$, the graph $G_n(\pi_n,D_n)$ is vertex-transitive.

\begin{lemma}[Maximum degree and maximum clique size of a large induced subgraph]\label{lem:deg-preserve-simple}
Let $G$ be a vertex-transitive graph on $N$ vertices of a (regular) degree $d$. Let $S\subseteq V(G)$ have size $M$ and $G[S]$ be the induced subgraph on $S$. Then its maximum degree $\Delta(G[S])$ and maximum clique size $\omega(G[S])$ satisfy:
\begin{equation}
    \Delta(G[S])\ 
\ge\ d\Bigl(2-\frac{N}{M}\Bigr),  \quad (M \geq \frac{N}{2})
\end{equation}
\begin{equation}
    \omega(G[S])\ \ge\ \Bigl\lceil \frac{M}{N}\,\omega(G)\Bigr\rceil.
\end{equation}
In particular, if $\frac{M}{N}\ge 1-\varepsilon$, then
\begin{equation}
    \Delta(G[S])\ \ge\ d\cdot \frac{1-2\varepsilon}{1-\varepsilon}, \quad 
    \omega(G[S])\ge \lceil (1-\varepsilon)\cdot \omega(G)\rceil.
\end{equation}
\end{lemma}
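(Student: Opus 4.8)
The plan is to prove the two inequalities separately, using only $d$-regularity for the degree bound and the full strength of vertex-transitivity (through the automorphism group) for the clique bound, and then to specialize both to $M/N\ge 1-\varepsilon$.

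For the degree bound I would argue by averaging. Write $S^c=V(G)\setminus S$, so $|S^c|=N-M$. Each vertex $v\in S$ has exactly $d$ neighbors in $G$, so its degree in $G[S]$ equals $d$ minus the number of its neighbors lying in $S^c$. Summing over $v\in S$ gives
\begin{equation}
\sum_{v\in S}\deg_{G[S]}(v)=dM-e(S,S^c),
\end{equation}
where $e(S,S^c)$ counts the edges of $G$ with one endpoint in each part (each such edge is counted once, since exactly one endpoint lies in $S$). Since every vertex of $S^c$ has total degree $d$, at most $d$ of its incident edges can reach $S$, whence $e(S,S^c)\le d\,|S^c|=d(N-M)$. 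Therefore the total degree is at least $dM-d(N-M)=d(2M-N)$, the average degree in $G[S]$ is at least $d(2-N/M)$, and the maximum degree exceeds the average, giving $\Delta(G[S])\ge d(2-N/M)$. The hypothesis $M\ge N/2$ is needed only so that the right-hand side is non-negative.

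For the clique bound I would run an averaging argument over $\Gamma\triangleq\mathrm{Aut}(G)$. Fix a maximum clique $K$ with $|K|=\omega(G)$, and draw $\sigma$ uniformly from $\Gamma$; then $\sigma(K)$ is again a clique of size $\omega(G)$. The crucial observation is that, by vertex-transitivity, $\sigma(x)$ is uniform on $V(G)$ for each fixed $x$: by orbit--stabilizer $|\Gamma|=N\,|\mathrm{Stab}(x)|$, and for each target $y$ the set $\{\sigma:\sigma(x)=y\}$ is a coset of $\mathrm{Stab}(x)$ of size $|\Gamma|/N$, so $\Pr[\sigma(x)\in S]=M/N$. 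By linearity of expectation,
\begin{equation}
\E\bigl[\,|\sigma(K)\cap S|\,\bigr]=\sum_{x\in K}\Pr[\sigma(x)\in S]=\frac{M}{N}\,\omega(G).
\end{equation}
Hence some $\sigma$ attains $|\sigma(K)\cap S|\ge \frac{M}{N}\omega(G)$, and being an integer it is at least $\lceil \frac{M}{N}\omega(G)\rceil$. Since $\sigma(K)\cap S$ is a clique of $G$ contained in $S$, it is a clique of $G[S]$, so $\omega(G[S])\ge \lceil \frac{M}{N}\omega(G)\rceil$.

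The corollaries then follow by substitution: under $M/N\ge 1-\varepsilon$ we have $N/M\le 1/(1-\varepsilon)$, so $d(2-N/M)\ge d\cdot\frac{1-2\varepsilon}{1-\varepsilon}$, while monotonicity of the ceiling gives $\lceil \frac{M}{N}\omega(G)\rceil\ge \lceil (1-\varepsilon)\omega(G)\rceil$. I expect the only genuinely delicate point to be the justification that a uniformly random automorphism sends a fixed vertex to a uniformly random vertex of $G$; this is exactly where vertex-transitivity is used and cannot be dropped, since an arbitrary $d$-regular graph need not contain a clique of proportional size inside a large subset. Everything else reduces to elementary edge counting and a single ceiling, so no further obstacle is anticipated.
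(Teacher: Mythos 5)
Your proof is correct and follows essentially the same route as the paper's: the degree bound via counting boundary edges and comparing the maximum degree to the average degree, and the clique bound by averaging $|\sigma(K)\cap S|$ over a uniformly random automorphism $\sigma\in\mathrm{Aut}(G)$. The only cosmetic differences are that you apply linearity of expectation by summing over the clique vertices rather than over $S$ (the paper's dual but equivalent computation), and that you spell out the orbit--stabilizer justification for uniformity, which the paper leaves implicit.
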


We leave the proof of this lemma to the appendix.

From \eqref{defrationeq}, we have that
\begin{align}
\lim_{n\rightarrow\infty}\frac{|V(G'_n)|}{|V(G_n)|}=1.
\end{align}
Thus, Lemma \ref{lem:deg-preserve-simple} along with \eqref{eqnDDD1} and \eqref{eqnDDD2} imply that
\begin{align}
\limsup_{n\rightarrow\infty}\frac{\Delta(G_n)}{\Delta(H_n)}\leq 1,\label{eqnDeltaBound}
\\
\limsup_{n\rightarrow\infty}\frac{\omega(G_n)}{\omega(H_n)}\leq 1.\label{eqnOmegaBound}
\end{align}

\subsection{A maximum degree converse bound}\label{subsec:degree-converse}

Assume that $p \triangleq P_X(1) < \frac12$. 
Take a sequence $\epsilon_n$ such that $\epsilon_n\rightarrow 0$ and $\epsilon_n\sqrt{n}\rightarrow\infty$ as $n$ goes to infinity. Proposition \ref{prop:reduction-one-type} yields a sequence of types $\pi_n$ within ${\epsilon_n}$ distance of $P_X$. Let
$p_n \triangleq \pi_n(1)$.
Then, $\lim_{n\rightarrow\infty}p_n=p$. Let $k_n \triangleq np_n \in \mathbb Z$. Assume $\lfloor D_n/2 \rfloor \leq k_n$. Then, 
\begin{align}
\Delta(G_n)&=\sum_{j=1}^{\lfloor D_n/2\rfloor}\binom{k_n}{j}\binom{n-k_n}{j}
\geq\max_{1\leq j\leq \lfloor D_n/2\rfloor}\binom{k_n}{j}\binom{n-k_n}{j}\label{eq:degG-exact}
\end{align}
because two vertices in $G_n$ are adjacent iff they differ in at most $D_n$ coordinates, i.e.,
they swap at most $j\le \lfloor D_n/2\rfloor$ ones with zeros.

Likewise, $H_n$ is the $D_n'$-th power of the $\ell_n$-cube where $\ell_n= \lfloor nR\rfloor$, so
\begin{align}
\Delta(H_n)&=\sum_{i=1}^{\min(D_n',\ \ell_n)}\binom{\ell_n}{i}
\label{eq:degH-exact}
\end{align}
Note that the index starts from $1$ because edges only connect {distinct} vertices. 

It is known that for $0 \leq k\leq n$, we have
\begin{align}
    \binom{n}{k}&\geq \frac{1}{n+1}2^{nh_2(k/n)}\label{eqnEE1}\\
    \binom{n}{k}&\geq \frac{n^k}{k^k}\label{eqnEE2}
\end{align}

It is also known that for $0 \leq k \leq n/2$, we have
\begin{align}
    \sum_{i=0}^k \binom{n}{i}&\leq 2^{n h_2(k/n)},\label{expnup1}\\
 \sum_{i=0}^k \binom{n}{i}&\leq (k+1)\binom{n}{k}\leq (k+1) \frac{(en)^k}{k^k}.\label{expnup2} 
\end{align}

\begin{proof}[Proof of Theorem \ref{thm::degree_linear}]

\textit{(i.) Linear regime.}

From \eqref{eqnDeltaBound}:
\begin{align}
    &\log \Big [\limsup_{n\rightarrow\infty}\frac{\Delta(G_n)}{\Delta(H_n)}\Big]\leq 0 
    \\& \implies \limsup_{n\rightarrow\infty}\Big [\log {\Delta(G_n)} - \log {\Delta(H_n)} \Big] \leq 0 \label{eq:deg_logranged}
    \\& \implies \limsup_{n\rightarrow\infty}\Big [ \frac1n\log {\Delta(G_n)} - \frac1n\log {\Delta(H_n)} \Big] \leq 0 \notag
    \\& \implies \limsup_{n\rightarrow\infty}\Big [\frac1n\log {\Delta(G_n)}\Big ] \leq \limsup_{n\rightarrow\infty}
     \Big [\frac1n\log {\Delta(H_n)} \Big] .\label{eq:degree_degenerated}
\end{align}

We know that \(\lim_{n \to \infty} D_n/n = d_1\), \(\lim_{n \to \infty} D_n'/n = d_2\), $\lim_{n \to \infty}k_n/n =p$ and $\lim_{n \to \infty}\ell_n/n = R$. In the linear regime, $0\le \frac{d_1}{2} \le p$ and $0 \leq d_2 \leq R$.

If $D'_n<\ell_n/2$, from \eqref{expnup1}  we have $\Delta(H_n)\leq 2^{\ell_nh_2({D'_n}/\ell_n)}$.

If $D'_n>\ell_n/2$, we have $\Delta(H_n)\leq 2^{\ell_n}$.

Thus, we can write both cases together as
$\limsup_{n \to \infty} \frac{1}{n} \log \Delta (H_n)\leq 
    \max_{0 \leq y \leq d_2} R\cdot h_2(\frac{y}{R})$.

From \eqref{eq:degG-exact} and \eqref{eqnEE1}, we have 
\begin{align*}
    \limsup_{n \to \infty} \frac{1}{n} \log \Delta (G_n) \geq \max_{0\le x\le \frac{d_1}{2}} s(x)
\end{align*}
where $s(x)\triangleq p\,h_2\!\Big(\frac{x}{p}\Big)+(1-p)\,h_2\!\Big(\frac{x}{1-p}\Big).$
Thus, from \eqref{eq:degree_degenerated} we obtain,
\begin{align*}
\max_{0\le x\le \frac{d_1}{2}} s(x) \leq \max_{0 \leq y \leq d_2} R\cdot h_2(\frac{y}{R}).
\end{align*}

Observe that, $s(p(1-p)) = h_2(p)$. Recall $\Delta(G_n) \leq |G_n|$, hence $\lim_{n \to \infty} \frac{1}{n}\log \Delta(G_n) \leq \lim_{n \to \infty} \frac{1}{n} \log |G_n| = h_2(p)$. Thus, $\max_{0\le x\le {d_1/2}} s(x) = h_2(p)$ if $p(1-p) \leq \frac{d_1}{2}$. Moreover, $s(\cdot)$ is concave. We can show that $s'(x) = \ln (\frac{p-x}{x}) + \ln(\frac{1-p-x}{x}) \geq 0$ when $0 \leq x \leq p(1-p)$. Hence, $\sup_{0\le x\le {d_1/2}} s(x) = s(\frac{d_1}{2})$ if $\frac{d_1}{2}\leq p(1-p)$. Using a similar argument we can show that
\begin{align*}
\max_{0 \leq y \leq d_2} \left [R\cdot h_2(\frac{y}{R}) \right]=
\begin{cases}
R\cdot h_2(\frac{d_2}{R}), & 0\le d_2 < R/2,\\
R, & d_2 \ge R/2.
\end{cases}
\end{align*}

\smallskip
\noindent\textit{(ii.) Sublinear regime.}
 
Using \eqref{eq:degG-exact} and \eqref{eqnEE2}, we have the following for some constant $C_1$:
\begin{align}
\Delta(G_n)&\geq\max_{1\leq j\leq \lfloor D_n/2\rfloor}\left(\frac{k_n(n-k_n)}{j^2}\right)^j
\\&\geq \left(\frac{k_n(n-k_n)}{(D_n/2)^2}\right)^{D_n/2}
\\&\geq \left(\frac{n^2(p-\epsilon_n)(1-p-\epsilon_n)}{(D_n/2)^2}\right)^{D_n/2}.\label{eq:deg_sublinear_1}
\end{align}
Since $D_n'=o(n)$, we expect $D_n'\leq \ell_n/2$ to hold for large enough $n$. For clean asymptotics, w.l.o.g take $D_n$ even. Using \eqref{eq:degH-exact} and \eqref{expnup2},  we have the following for $D_n'\leq \ell_n/2$
\begin{align}
\Delta(H_n)&\leq(1+D_n')\left(\frac{enR}{D_n'}\right)^{D_n'}  \leq\left(\frac{e^2nR}{D_n'}\right)^{D_n'}, \label{eq:deg_sublinear_2}
\end{align}
where we used $(1+D_n')^{1/D_n'} \leq e$. 
Thus, we obtain that 
\begin{align*}
    \lim_{n\rightarrow\infty}n^{D_n-D'_n}\cdot \frac{D_n'^{D_n'}\left((p-\epsilon_n)(1-p-\epsilon_n)\right)^{D_n/2}}{(D_n/2)^{D_n}\left(e^2R\right)^{D_n'}}\leq 1.
\end{align*}

If $D_n=D$ and $D'_n=D'$ are constant, we obtain that we must have $D\leq D'$ regardless of the rate $R$.

\end{proof}

\subsection{A clique size converse bound}\label{sec:clique-converse}
Let us recall the complete intersection theorem conjectured in \cite{erd6s1961intersection} and proved by the Ahlswede-Khachatrian theorem   \cite{ahlswede1997complete}. 
Let $n, k, t$ be integers such that $1 \le t \le k \le n$. Let $\binom{[n]}{k}$ denote the set of all subsets of $\{1, \dots, n\}$ of size $k$. A family $\mathcal{F} \subseteq \binom{[n]}{k}$ is called \textit{$t$-intersecting} if $|A \cap B| \ge t$ for all $A, B \in \mathcal{F}$.

For any integer $r \ge 0$, we define the family $\mathcal{F}_r$ as:
\begin{equation}
    \mathcal{F}_r = \left\{ A \in \binom{[n]}{k} : |A \cap [t+2r]| \ge t+r \right\}.
\end{equation}
The size of this family is given by:
\begin{equation}
    |\mathcal{F}_r| = \sum_{j=t+r}^{\min(k, t+2r)} \binom{t+2r}{j} \binom{n-(t+2r)}{k-j}.
\end{equation}

\begin{theorem}[Ahlswede \& Khachatrian \cite{ahlswede1997complete}, 1997]\label{thm:Ahlswede}
    The maximum size of a $t$-intersecting family of $k$-subsets of an $n$-set is
    \begin{equation}
        M(n, k, t) = \max_{r \in \{0, 1, \dots, \lfloor(n-t)/2\rfloor \}} |\mathcal{F}_r|.
    \end{equation}
    Furthermore, the maximum is attained by the family $\mathcal{F}_r$ with
    \begin{equation}
        r=\left\lceil \frac{(k-t+1)(t-1)}{n-2(k-t+1)}-1\right\rceil. \label{eq:r_max}
    \end{equation}
\end{theorem}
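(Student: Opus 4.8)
The plan is to follow the compression-based route underlying the Ahlswede--Khachatrian argument, which I only sketch since this is a landmark result whose full proof is lengthy. First I would reduce to \emph{shifted} families. For $i<j$, recall the shift $S_{ij}$ that replaces each $A\in\mathcal F$ with $j\in A$, $i\notin A$ by $(A\setminus\{j\})\cup\{i\}$ whenever that set is not already present. A direct verification shows $S_{ij}$ preserves both $|\mathcal F|$ and the $t$-intersecting property, so iterating shifts lets us assume $\mathcal F$ is left-compressed without loss of generality. Since each candidate extremal family $\mathcal F_r$ is itself shifted, this reduction loses no optimizer.

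Next I would exploit the rigidity of shifted families. For a shifted $t$-intersecting family the intersection condition localizes: it is enough to control the minimal members under the shifting order, so $\mathcal F$ is governed by a small collection of \emph{generating sets}. The goal of this stage is to reduce the search for maximizers to structured families indexed by a single integer parameter, the eventual $r$.

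The core of the argument --- and the step I expect to be the main obstacle --- is the \emph{pushing--pulling} method. One introduces a local exchange that redistributes weight between the ``levels'' of the family while preserving the $t$-intersecting property, together with a monotonicity inequality showing that this exchange never decreases $|\mathcal F|$ and steers an arbitrary shifted family toward some Frankl family $\mathcal F_r=\{A:|A\cap[t+2r]|\ge t+r\}$. Establishing this inequality uniformly over the full range $1\le t\le k\le n$, typically via a delicate case analysis combined with induction on $n$ (or on $k-t$), is where essentially all the difficulty of the complete intersection theorem concentrates; the shifting step alone is far from sufficient.

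Finally, once the extremizers are pinned to the one-parameter family $\{\mathcal F_r\}$, the bound $M(n,k,t)=\max_r|\mathcal F_r|$ is immediate. To extract the explicit optimizer \eqref{eq:r_max} I would compare consecutive sizes: the sequence $|\mathcal F_r|$ is unimodal in $r$, so the maximizer is the first $r$ with $|\mathcal F_{r+1}|\le|\mathcal F_r|$; writing out this inequality from the closed form for $|\mathcal F_r|$ and solving the resulting balance $(r+1)\bigl(n-2(k-t+1)\bigr)\approx(k-t+1)(t-1)$ yields the stated value of $r$.
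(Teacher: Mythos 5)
First, a point of context: the paper does not prove this theorem at all. It is imported verbatim as a known result of Ahlswede and Khachatrian \cite{ahlswede1997complete}, and is used downstream only as a black box to evaluate $\omega(G_n)$ in the clique-size converse. So there is no proof in the paper to compare your route against; your sketch must stand or fall on its own.

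On its own terms, your outline correctly identifies the architecture of the actual Ahlswede--Khachatrian argument: reduction to left-compressed (shifted) families via the exchange operation $S_{ij}$, the use of generating sets to exploit the rigidity of shifted $t$-intersecting families, and the pushing--pulling method that steers an arbitrary shifted family toward one of the Frankl families $\mathcal F_r$. Your derivation of \eqref{eq:r_max} from the balance $(r+1)\bigl(n-2(k-t+1)\bigr)\approx(k-t+1)(t-1)$ is also consistent with the standard form of the theorem, where the optimal $r$ is characterized by $(k-t+1)\bigl(2+\frac{t-1}{r+1}\bigr)\le n<(k-t+1)\bigl(2+\frac{t-1}{r}\bigr)$. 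However, as a proof this has a genuine and decisive gap, which you yourself flag: the pushing--pulling monotonicity inequality --- the statement that the exchange never decreases $|\mathcal F|$ while preserving the $t$-intersecting property, established uniformly over $1\le t\le k\le n$ by induction and case analysis --- is named but not carried out, and that step \emph{is} the theorem; everything before it (shifting) was classical and known to Frankl decades earlier, and everything after it is routine. Two smaller unproved assertions also do real work in your sketch: that a shifted $t$-intersecting family is ``governed by a small collection of generating sets'' in a form strong enough to reduce to the one-parameter family $\{\mathcal F_r\}$, and that $|\mathcal F_r|$ is unimodal in $r$ (needed for your ``first $r$ with $|\mathcal F_{r+1}|\le|\mathcal F_r|$'' extraction). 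Given that the paper itself treats this as a citation, the appropriate resolution is either to cite it as the paper does, or to commit to the full pushing--pulling analysis; the sketch as written is a correct map of the proof, not a proof.
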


In our problem setting, consider $G_n(\pi_n,D_n)$ for some $\pi_n$ and $D_n$. Let $k_n=n\pi_n(1)$. Then, 
\begin{align}
    \omega(G_n)&=M\left(n,k_n,\left\lceil k_n-\frac{D_n}{2}\right\rceil\right)
    \\&\geq\max_{r_n,j_n}\binom{t_n+2r_n}{j_n} \binom{n-(t_n+2r_n)}{k_n-j_n}\label{eqnLBs}
\end{align}
where $t_n=\left\lceil k_n-\frac{D_n}{2}\right\rceil = k_n - \lfloor \frac{D_n}{2} \rfloor$, $r_n \in \{0, 1, \dots, \lfloor(n-t_n)/2\rfloor \}$, and $j_n \in \{t_n+r_n, \dots, \min(k_n, t_n+2r_n) \}$.

On the other hand, a clique in $H(\ell_n, D'_n)$ is exactly a subset of the $\ell_n$-dimensional hypercube with
{diameter} at most Hamming distance $D'_n$. The maximum size
of such a subset is (also) conjectured in \cite{erd6s1961intersection} and is then proved by Kleitman in \cite{kleitman1966combinatorial}.

\begin{theorem}[Kleitman \cite{kleitman1966combinatorial}, 1966]\label{thm:kleitman}We have
    \begin{equation*}
\omega(H(\ell_n,D'_n)) = \begin{cases}
2^{\ell_n}, & \!\!\!\!D'_n\ge \ell_n,\\
B(\ell_n,v_n), & \!\!\!\!D'_n=2v_n<\ell_n,\\
B(\ell_n,v_n)\!+\!\binom{\ell_n-1}{v_n}, & \!\!\!\!D'_n=2v_n+1<\ell_n.
\end{cases}
\end{equation*}
\end{theorem}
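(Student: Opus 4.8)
The plan is to prove the two directions — an explicit construction for the lower bound and an extremal argument for the upper bound — separately, writing $B(\ell,v)=\sum_{i=0}^{v}\binom{\ell}{i}$ for the size of a Hamming ball of radius $v$ in $\{0,1\}^{\ell}$. Recall that a clique of $H(\ell_n,D_n')$ is exactly a subset $\mathcal A\subseteq\{0,1\}^{\ell_n}$ of diameter at most $D_n'$, so the quantity to determine is $K(\ell,D)\triangleq\max\{|\mathcal A|:\operatorname{diam}(\mathcal A)\le D\}$.

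\emph{Lower bound (the easy direction).} When $D\ge\ell$ the whole cube has diameter $\le\ell\le D$, giving $2^{\ell}$. For $D=2v<\ell$, take $\mathcal A$ to be a ball of radius $v$ about any fixed point; by the triangle inequality any two points are within $2v$, so $\operatorname{diam}(\mathcal A)=2v$ and $|\mathcal A|=B(\ell,v)$. For $D=2v+1<\ell$, I would augment the radius-$v$ ball centered at $0$ by all weight-$(v+1)$ points whose last coordinate equals $1$: any two such points agree on the last coordinate and differ in at most $2v$ of the remaining coordinates, and each is within $(v+1)+v=2v+1$ of every point of weight at most $v$, so the diameter stays $\le 2v+1$; the number of added points is $\binom{\ell-1}{v}$, matching the claim.

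\emph{Upper bound.} I would first reduce to the case where $\mathcal A$ is a \emph{downset} (ideal) via down-compressions $D_i$, the standard size-preserving operation that deletes coordinate $i$ from a point whenever the resulting point is not already present. The key lemma is that $D_i$ does not increase the diameter: whenever a compressed pair moves apart, one exhibits a different preimage pair already in $\mathcal A$ at the same Hamming distance, so $\operatorname{diam}(D_i(\mathcal A))\le\operatorname{diam}(\mathcal A)$. Iterating over all coordinates reaches a downset of the same cardinality and diameter $\le D$, and since both extremal families above are themselves downsets, no optimality is lost in this reduction.

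For a downset I would induct on $\ell$ by splitting on the last coordinate into $\mathcal A_0=\{X:\ell\notin X\}$ and $\mathcal A_1=\{X:X\cup\{\ell\}\in\mathcal A\}$, both downsets in $\{0,1\}^{\ell-1}$ with $\mathcal A_1\subseteq\mathcal A_0$; the diameter constraint propagates to $\operatorname{diam}(\mathcal A_0)\le D$, $\operatorname{diam}(\mathcal A_1)\le D$, and a cross-distance bound $|X\triangle Y|\le D-1$ for $X\in\mathcal A_0,\ Y\in\mathcal A_1$, after which matching the target reduces to Pascal's identity $B(\ell,v)=B(\ell-1,v)+B(\ell-1,v-1)$. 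The hard part lies precisely here: the naive split only yields $\operatorname{diam}(\mathcal A_1)\le D-1$ rather than $D-2$, so it overshoots the even-case target by exactly one half-sphere term $\binom{\ell-2}{v-1}$, and the induction fails to close. Closing it requires carrying a \emph{two-family invariant} that tracks the pair $(\mathcal A_0,\mathcal A_1)$ with \emph{separate} diameter and cross-distance parameters and recurses on both simultaneously, reconciling the even and odd cases in tandem so that the surplus half-sphere terms cancel. It is this coupled, multi-parameter bookkeeping — rather than any single inequality — that I expect to be the main obstacle, and it is the genuine content of Kleitman's theorem.
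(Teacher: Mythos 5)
Your lower-bound constructions are correct and complete: the full cube when $D'_n\ge\ell_n$, the radius-$v_n$ ball for even diameter, and the ball augmented by the weight-$(v_n+1)$ points containing a fixed coordinate for odd diameter, whose count is indeed $\binom{\ell_n-1}{v_n}$; your diameter verifications are all sound. Your compression lemma is also correct as stated: in the only problematic case ($A$ compressed at coordinate $i$ while $B$ is kept because $B\setminus\{i\}$ is already in the family), the pair $(A,\,B\setminus\{i\})$ lies in the original family at exactly the distance of the new pair, so down-compression cannot increase the diameter, and since both extremal families are downsets the reduction loses nothing.

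But the proof has a genuine gap, and you identify it yourself: the upper bound --- which is the entire content of Kleitman's theorem, the direction that turns your construction into an equality --- is never established. Your induction on $\ell$ ``fails to close,'' and you defer the repair to an unspecified ``two-family invariant'' with ``coupled, multi-parameter bookkeeping'' that you ``expect to be the main obstacle.'' Naming the obstacle is not overcoming it: as written, no upper bound on $\omega(H(\ell_n,D'_n))$ is proved for either parity, so the theorem is not proved. For context, the paper itself supplies no proof of this statement either; it invokes it as a classical result (conjectured by Erd\H{o}s in 1961 and proved by Kleitman in 1966, \cite{kleitman1966combinatorial}) and uses only the statement downstream in the proof of Theorem~\ref{thm::clique_size_linear}. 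A citation is an acceptable ``proof'' in the paper's context, but a blind proof attempt must supply the extremal argument, and yours stops exactly where the difficulty begins; completing it would require reproducing Kleitman's inductive argument (or a later alternative, e.g., via Katona-type intersection/shifting machinery), which is substantially more than the naive split you describe.
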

where for $v_n,\ell_n \in \mathbb N$, $\ 0\le v_n\le \ell_n$, and $B(\ell_n,v_n)\triangleq \sum_{i=0}^{v_n}\binom{\ell_n}{i}. $

{

\begin{proof}[Proof of Theorem \ref{thm::clique_size_linear}]

\textit{(In Linear regime.)} 

Recall that in the linear regime,\(\lim_{n \to \infty} D_n/n = d_1\), \(\lim_{n \to \infty} D_n'/n = d_2\), $\lim_{n \to \infty}k_n/n =p$ and $\lim_{n \to \infty}\ell_n/n = R$. Moreover, $0\le \frac{d_1}{2} \le p$ and $0 \leq d_2 \leq R$.

Assuming $r_n = n \alpha$ and $j_n = n \beta$.  From \eqref{eqnLBs} and \eqref{eqnEE1},
\begin{align}
    \limsup_{n \to \infty} \frac1n \log 
    \omega(G_n) \geq \Lambda(d_1,p) \label{ineq:log_clique_lower}
\end{align}

where
\begin{align*}
\Lambda(d_1,p)
&\triangleq \max_{\alpha \in \mathcal S_1, \beta \in \mathcal S_2(\alpha)}
(p-\frac{d_1}{2} + 2\alpha) h_2\left( \frac{\beta}{p - \frac{d_1}{2} + 2\alpha } \right) 
\\& \qquad\qquad\qquad + (1-p+\frac{d_1}{2} - 2\alpha)h_2 \left ( \frac{p-\beta}{1-p+\frac{d_1}{2} - 2\alpha} \right),
\\ \mathcal S_1 &= \left\lbrace \alpha:0 \leq \alpha \leq \frac{1-p+\frac{d_1}{2}}{2}, \mathcal S_2(\alpha) \neq \emptyset \right\rbrace,
\\ \mathcal S_2(\alpha) &= \left \lbrace \beta: p-\frac{d_1}{2}+\alpha \leq \beta \leq \min\left[p, p - \frac{d_1}{2} + 2\alpha\right] \right \rbrace.
\end{align*}

On the other hand, from Theorem \ref{thm:kleitman} and \eqref{expnup1}, we have

\begin{align}
    \limsup_{n \to \infty} \frac1n \log \omega(H_n) \leq \max_{0 \leq \gamma \leq \frac{d_2}{2}} R\cdot h_2(\frac{\gamma}{R})= \Psi(d_2, R) \label{ineq:log_clique_upper}
\end{align}
where 
\begin{align*}
    \Psi(d_2,R)&\triangleq
\begin{cases}
R\cdot h_2\!\big(\frac{d_2}{2R}\big), & 0\le d_2< R,\\[3pt]
R, & d_2\ge R.
\end{cases}
\end{align*}

From \eqref{eqnOmegaBound}, we have 
\begin{equation}
    \limsup_{n \to \infty} \frac1n \log \omega(G_n) \leq \limsup_{n \to \infty} \frac1n \log \omega(H_n). \label{ineq:log_clique_ess}
\end{equation}

Combine \eqref{ineq:log_clique_lower}, \eqref{ineq:log_clique_upper}, and \eqref{ineq:log_clique_ess}, we have
\begin{equation}
    \Lambda(d_1,p) \leq \Psi(d_2, R).
\end{equation}

We then prove that for 
\[
p\in\left[\frac{d_1}{2},\,1-\frac{d_1}{2}\right],
\]
we have
\[{\ \Lambda(d_1,p)=h_2(d_1/2).\ \text{(independent of $p$)}\ }
\]

Note that if we set
\begin{align}
    \alpha&=\left(p-\frac{d_1}{2}\right)
\frac{d_1}{2(1-d_1)}, \label{eq:id_alpha}\\
\beta&=\left(1-\frac{d_1}{2}\right)\left(p-\frac{d_1}{2}\right)\frac{1}{1-d_1}, \label{eq:id_beta}
\end{align}
we get that $\Lambda(d_1,p)\geq h_2(d_1/2)$. It remains to show $\Lambda(d_1,p)\leq h_2(d_1/2)$.
    
Next, one can show that $ \Lambda(d_1,p)$ has the following equivalent form:
\[
\Lambda(d_1,p)=\max_{\text{feasible }(X,Y)} H(X| Y).
\]
where the feasibility constraints are 
$\mathbb P[X=1]=p$, 
$\mathbb P[X\neq Y]\le d_1/2$, and
$\mathbb P[Y=1]\geq p-\frac{d_1}{2}$. 
Define
\[
q \triangleq p-\frac{d_1}{2}+2\alpha.
\]
Then the two weights become $q$ and $1-q$. If we set
\[
\beta = \mathbb P[X=1,Y=1],
\]
then necessarily $\mathbb P[X=1]=p$ is enforced by taking
\[
\mathbb P[X=1,Y=0]=p-\beta.
\]

Now look at the two entropy terms:
\begin{itemize}
\item $q\, h_2\!\left(\frac{\beta}{q}\right)$ equals $\mathbb P[Y=1]\cdot H(X| Y=1)$ because
$\mathbb P[X=1| Y=1]=\beta/q$.
\item $(1-q)\, h_2\!\left(\frac{p-\beta}{1-q}\right)$ equals $\mathbb P[Y=0]\cdot H(X| Y=0)$ because
$\mathbb P[X=1| Y=0]=(p-\beta)/(1-q)$.
\end{itemize}

Hence the maximand is
\[
q\, h_2\!\left(\frac{\beta}{q}\right) + (1-q)\, h_2\!\left(\frac{p-\beta}{1-q}\right)
= H(X| Y),
\]
for some binary pair $(X,Y)$ with $\mathbb P[X=1]=p$. Therefore
\[
\Lambda(d_1,p)=\max_{\text{feasible }(X,Y)} H(X| Y).
\]
The constraints are
$0 \leq \alpha $
which reduces to
$$q\geq p-\frac{d_1}{2}.$$
Also,
$$p-\frac{d_1}{2}+\alpha \leq \beta$$
which reduces to $\mathbb P[X\neq Y]\le \frac{d_1}{2}$, as with the above joint pmf,
\begin{align*}
    &\mathbb P[X\neq Y]
= \mathbb P[X=1,Y=0]+\mathbb P[X=0,Y=1]
\\&= (p-\beta) + (q-\beta)
= p+q-2\beta.
\end{align*}

Rewrite the $\beta$-lower bound using $q=p-\frac{d_1}{2}+2\alpha$. From $\mathcal S_2(\alpha)$,
\[
\beta \ge p-\frac{d_1}{2}+\alpha
= \frac{p+q-\frac{d_1}{2}}{2}.
\]
Equivalently,
\[
p+q-2\beta \le \frac{d_1}{2},
\qquad\text{i.e.}\qquad
\mathbb P[X\neq Y]\le \frac{d_1}{2}.
\]
Thus, Theorem \ref{thm::clique_size_linear} is maximizing $H(X| Y)$ over all $(X,Y)$ with fixed $\mathbb P[X=1]=p$ and mismatch at most $d_1/2$.

Next, we show that concavity gives: $H(X| Y)\le h_2(d_1/2)$.
Define the conditional error probability
\[
e(Y)\triangleq \mathbb P[X\neq Y| Y].
\]
Since $X$ is binary, conditioning on $Y=y$ gives
\[
H(X| Y=y)=h_2\!\big(e(y)\big).
\]
Therefore
\[
H(X| Y)=\mathbb E\big[h_2(e(Y))\big].
\]
By concavity of $h_2(\cdot)$ and Jensen's inequality,
\begin{align*}
    &H(X| Y)=\mathbb E[h_2(e(Y))]
\le h_2\!\big(\mathbb E[e(Y)]\big)
= h_2\!\big(\mathbb P[X\neq Y]\big)
\le h_2\!\left(\frac{d_1}{2}\right).
\end{align*}
This bound does not depend on $p$. Hence for every $p$,
\[
\Lambda(d_1,p)\le h_2(d_1/2).\]




\end{proof}
}

\begin{figure*}[h]
    \centering
    \begin{subfigure}[b]{0.48\textwidth}
        \centering
        \includegraphics[width=\linewidth]{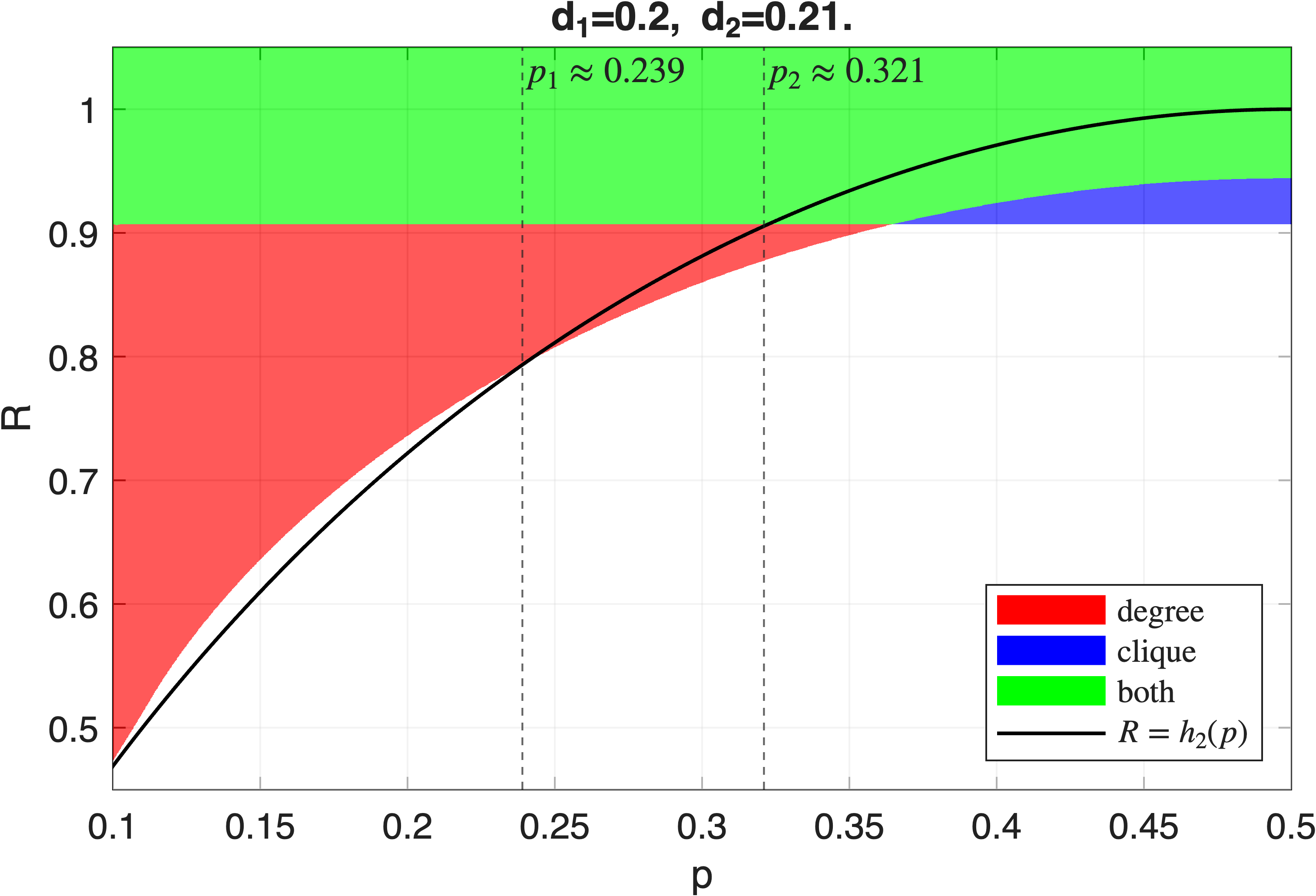}
        \caption{Fixed $d_1 = 0.2 < d_2 = 0.21$.}
        \label{fig:p1}
    \end{subfigure}
    \hfill 
    \begin{subfigure}[b]{0.48\textwidth}
        \centering
        \includegraphics[width=\linewidth]{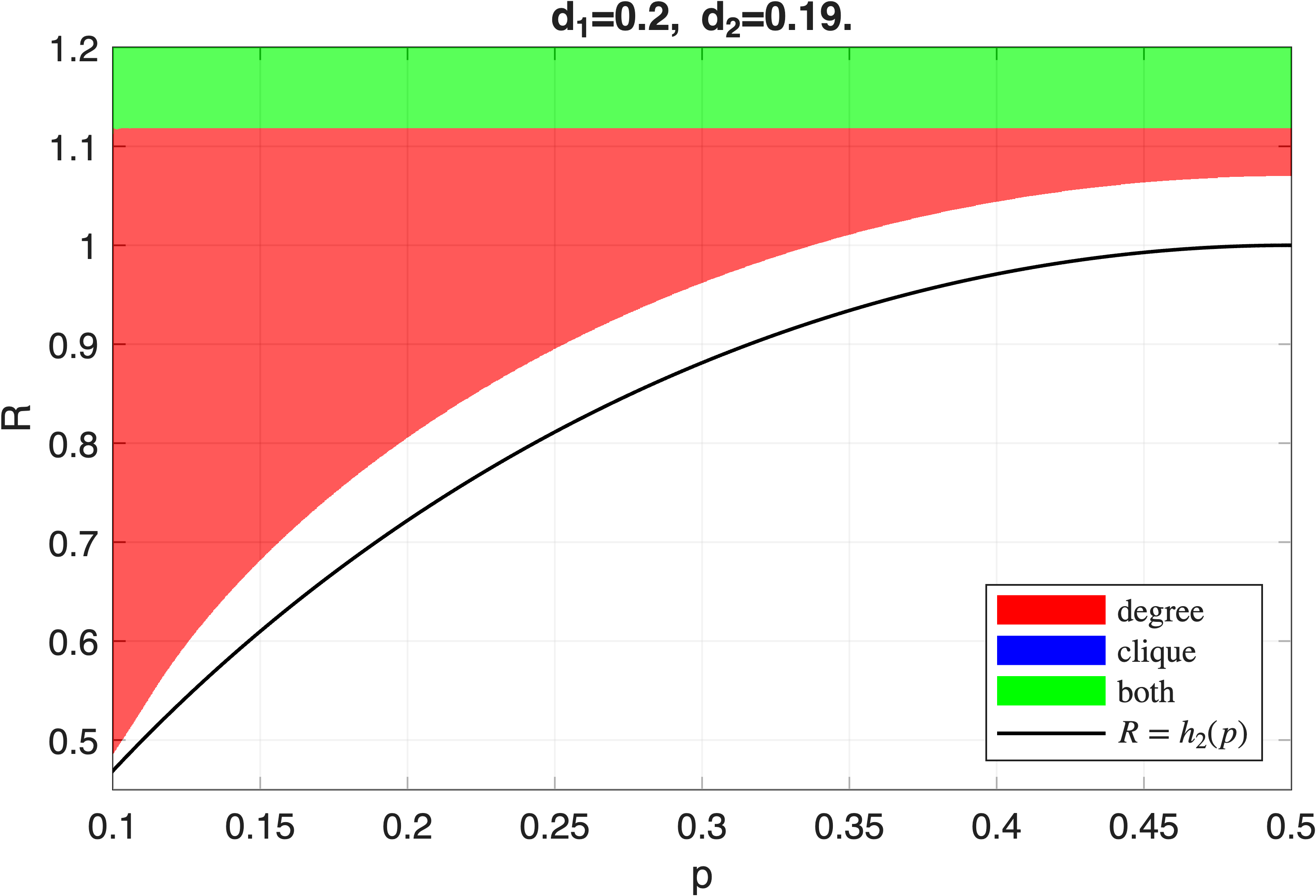}
        \caption{Fixed $d_1 = 0.2 > d_2 = 0.19$.}
        \label{fig:p2}
    \end{subfigure}
        \label{fig:combined_plots}
\end{figure*}

\section{Numerical evaluations}\label{sec::visualizations}

Fig.~\ref{fig:p1} and~\ref{fig:p2} show numerical simulations of our bounds in Theorem \ref{thm::degree_linear} and \ref{thm::clique_size_linear} (in the linear regime) with fixed $d_1$ and $d_2$ and over $(\frac{d_1}{2}\leq p \leq \frac12, R \geq d_2)$ pairs. $(p,R)$ pairs that satisfy both Theorem \ref{thm::degree_linear} and \ref{thm::clique_size_linear} are colored in green. Pairs satisfying Theorem \ref{thm::degree_linear} only are colored in red, and pairs satisfying Theorem \ref{thm::clique_size_linear} only are colored in blue. The black curve is the trivial converse $R \geq H(X) = h_2(p)$.

In Fig.~\ref{fig:p1}, $d_2 = 0.21 > d_1 = 0.2$. The degree bound in Theorem \ref{thm::degree_linear} is dominated by $R \geq h_2(p)$ after $p_1 \approx 0.239$. The clique size bound in Theorem \ref{thm::clique_size_linear} is dominated by $R \geq h_2(p)$ after $p_2 \approx 0.321$. In Fig.~\ref{fig:p2}, $d_2 = 0.19 < d_1 = 0.2$. The clique size bound in Theorem \ref{thm::clique_size_linear} dominates.

\section{Future work}\label{sec::discussions}
One may be able to use bounds on the independence number and the no-homomorphism lemma to obtain other converse bounds. Bounds on the independence number of $G_n$ relate to the maximal size of \emph{a constant weight code} \cite{best1978bounds, johnson2003new, agrell2002upper, graham2003lower}.

\section*{Appendix}

\subsection{Proof of Proposition \ref{prop:reduction-one-type}}
Take a sequence $\epsilon_n$ such that $\epsilon_n\rightarrow 0$ and $\epsilon_n\sqrt{n}\rightarrow\infty$ as $n$ goes to infinity. We have
\begin{equation}
    \lim_{n\rightarrow \infty}\Pr\big[X^n\in \mathcal T_{\epsilon_n}^{(n)}(X)\big]=1.\label{fact1eq}
\end{equation}
Next, let
$\mathcal{B}_n=\{x^n: g_n(f_n(x^n))=x^n\}.$
Since the error probability of the code vanishes, we have
\begin{align}
  \lim_{n\rightarrow \infty}\Pr[X^n\in\mathcal{B}_n]=1  \label{fact2eq}
\end{align} 
Let $\mathcal{F}_n=\mathcal{B}_n\bigcap \mathcal T_{\epsilon_n}^{(n)}(X)$. 
Using the union bound on the complement of sets, from \eqref{fact1eq} and \eqref{fact2eq}, we get
$  \lim_{n\rightarrow \infty}\Pr[X^n\in\mathcal{F}_n]=1  $.

Note that $\mathcal T_{\epsilon_n}^{(n)}(X)=\bigcup_{\pi_n\in\Pi_n({\epsilon_n}, P_X)}\mathcal T(\pi_n)$ is a disjoint union.
Define the conditional (type-wise) ``\emph{good}'' probabilities $\beta_n(\pi_n)\triangleq \Pr[X^n\in\mathcal F_n| X^n\in\mathcal T(\pi_n)].$
Then, since $\mathcal F_n\subseteq \mathcal T_{\epsilon_n}^{(n)}(X)$, we have
\begin{align*}
&\Pr[X^n\in\mathcal F_n]
\leq\Pr[X^n\in\mathcal F_n| X^n\in \mathcal T_{\epsilon_n}^{(n)}(X)]
\\&=\!\!\sum_{\pi_n\in\Pi_n({\epsilon_n}, P_X)} \!\!\beta_n(\pi_n)\,
\Pr\big[X^n\in\mathcal T(\pi_n)| X^n\in \mathcal T_{\epsilon_n}^{(n)}(X)\big].
\end{align*}
Consequently, $\max_{\pi_n\in\Pi_n({\epsilon_n}, P_X)} \beta_n(\pi_n)\ \ge\
\Pr[X^n\in\mathcal F_n].$
In particular, if $\Pr[X^n\in\mathcal F_n]\to 1$, then there exists a sequence
$\{\pi^*_n: \pi^*_n\in\Pi_n({\epsilon_n}, P_X)\}_{n\ge 1}$ such that
$\beta_n(\pi^*_n)\to 1.$
Let
$\mathcal A_n\triangleq \mathcal F_n\cap \mathcal T(\pi^*_n).$
Since conditioned on $X^n\in\mathcal T(\pi_n)$, the distribution of $X^n$ is uniform over $\mathcal T(\pi_n)$, the condition $\beta_n(\pi^*_n)\rightarrow 1$ yields the desired result.

\subsection{Proof of Lemma \ref{lem:deg-preserve-simple}}

\begin{proof}
Write $e(G[S])$ for the number of edges inside $S$ and $\partial(S)$ for the edge boundary
$\{(u,v)\in E(G):u\in S,v\notin S\}$. Counting edge incidences from vertices in $S$ gives $dM = 2e(G[S]) + |\partial(S)|.$
Each vertex in $V(G)\setminus S$ has degree $d$, hence it is incident to at most $d$ boundary edges,
so $|\partial(S)|\le d(N-M)$. Therefore,
$$
2e(G[S]) \ge dM - d(N-M)=d(2M-N).
$$
Dividing by $M$ gives the average degree lower bound, and $\Delta(G[S])$ is at least the average degree.
The final displayed inequality follows by substituting $M\ge (1-\varepsilon)N$.

Next, let $C$ be a maximum clique in $G$ with $|C|=\omega(G)$, and let $\Gamma=\mathrm{Aut}(G)$ be the automorphisms over $G$.
Choose $\sigma\in\Gamma$ uniformly at random. By vertex transitivity, each vertex $v\in V(G)$
is contained in $\sigma(C)$ with probability $|C|/N=\omega(G)/N$. Hence
\[
\mathbb E_\sigma\big[|\sigma(C)\cap S|\big]
=\sum_{v\in S}\Pr[v\in\sigma(C)]
= M\cdot \frac{\omega(G)}{N}.
\]
Therefore, for some $\sigma_0\in\Gamma$,
$|\sigma_0(C)\cap S|\ge M\omega(G)/N$. Since $\sigma_0(C)$ is a clique and
$\sigma_0(C)\cap S\subseteq S$, it is a clique in $G[S]$, proving the claim.
\end{proof}

\clearpage

\bibliographystyle{IEEEtran}
\bibliography{references}

\end{document}